\definecolor{green}{rgb}{0.6627,0.8196,0.5568}
\definecolor{blue}{rgb}{0.6875,0.8750,0.8984}
\definecolor{purple}{rgb}{ 0.7647,    0.6078,    0.8824}
\newcommand{\discardpages}[1]{
  \xdef\discard@pages{#1}
  \AtBeginShipout{
    \renewcommand*{\do}[1]{
      \ifnum\value{page}=##1\relax%
        \AtBeginShipoutDiscard
        \gdef\do####1{}
      \fi%
    }%
    \expandafter\docsvlist\expandafter{\discard@pages}
  }%
}
\newif\ifkeeppage
\newcommand{\keeppages}[1]{
  \xdef\keep@pages{#1}
  \AtBeginShipout{
    \keeppagefalse%
    \renewcommand*{\do}[1]{
      \ifnum\value{page}=##1\relax%
        \keeppagetrue
        \gdef\do####1{}
      \fi%
    }%
    \expandafter\docsvlist\expandafter{\keep@pages}
    \ifkeeppage\else\AtBeginShipoutDiscard\fi
  }%
}
\newtheorem{Theorem}{Theorem}
\newtheorem{Remark}{Remark}
\begin{document}
%

\title{MIMO Systems with One-bit ADCs: Capacity Gains using Nonlinear Analog Operations}

\author{

\IEEEauthorblockN{ Farhad Shirani$^\dagger$, Hamidreza Aghasi$\ddagger$}
\\\IEEEauthorblockA{$^\dagger$ North Dakota State University, $^\ddagger$ University of California, Irvine,
\\Email: f.shiranichaharsoogh@ndsu.edu, haghasi@uci.edu}
}


%


\maketitle

 \begin{abstract}
Analog to Digital Converters (ADCs) are a major contributor to the energy consumption on the receiver side of millimeter-wave multiple-input multiple-output (MIMO) systems with large antenna arrays. Consequently, there has been significant interest in using low-resolution ADCs along with hybrid beam-forming at MIMO receivers for energy efficiency. However, decreasing the ADC resolution results in performance loss --- in terms of achievable rates --- due to increased quantization error. In this work, we study the application of practically implementable nonlinear analog operations, prior to sampling and quantization at the ADCs, as a way to mitigate the aforementioned rate-loss.
A receiver architecture consisting of linear analog combiners, implementable nonlinear analog operators, and one-bit threshold ADCs is designed.
The fundamental information theoretic performance limits of the resulting communication system, in terms of achievable rates,  are investigated under various assumptions on the set of implementable nonlinear analog functions. In order to justify the feasibility of the nonlinear operations in the proposed receiver architecture, an analog circuit is introduced, and circuit simulations exhibiting the generation of the desired nonlinear analog operations are provided. 
\end{abstract}


%
\IEEEpeerreviewmaketitle

\section{Introduction}
In order to satisfy the ever-growing demand for higher data-rates and bandwidth, the emerging wireless networks operate in frequencies {above 6 GHz}  especially the millimeter wave (mm-wave) bands. The high carrier frequencies used in mm-wave systems allow for larger channel bandwidths
compared to lower-frequency systems. For instance, in conventional protocols such as LTE, the bandwidth is between 1.4 MHz and 20 MHz \cite{LTE1}, whereas in 
microwave WiFi standards such as IEEE 802.11ad the bandwidth is 2.16 GHz \cite{802.11ad}, and in mm-wave cellular applications, bandwidth of 500 MHz or more has been considered \cite{bw2}. The inherent high isotropic path loss and sensitivity to blockages at high frequencies
pose challenges in supporting high capacity and mobility \cite{rappaport2015millimeter}. As an example, 
Friis’ Law states that the isotropic path loss in free-space propagation is
inversely proportional to the wavelength squared \cite{heath2016overview}. In order to mitigate the path loss, mm-wave systems leverage narrow-beams, by using large antenna arrays at both base stations (BS) and user-ends (UE). For instance, fifth-generation (5G) wireless networks  envision hundreds of antennas at the BS and in excess of ten antennas at the UE \cite{hong2014study}. 
In conventional multiple-input multiple-output (MIMO) systems with digital beamforming, each antenna input/output is digitized separately \cite{DigBF1}. This requires each receiver antenna to be connected to a dedicated analog to digital converter (ADC). Since mm-wave systems use large arrays of antennas --- to mitigate the propagation losses due to small carrier wavelength --- digital beamforming in mm-wave systems requires a large number of ADCs which are a significant source of power consumption in MIMO receivers \cite{heath2016overview,mendez2015channel}.
Another contributing factor to the high energy demands in mm-wave ADC modules is the large channel bandwidth used in these applications. In theory, the power consumption of an ADC grows linearly in bandwidth, and the rate of increase is even more significant in practical implementations due to the excessive loss associated with the passive components at higher frequencies \cite{BR,ADCpower,razaviADC}. Consequently, the massive number of receiver antennas and large channel bandwidth result in a substantial increase in ADC power consumption in mm-wave MIMO systems. Furthermore, in standard ADC design, power consumption is proportional to the number of quantization bins and hence grows exponentially in the number of output bits \cite{walden1999analog}. This   limits the resolution of the ADCs due to power budget restrictions.

Hybrid beam-forming with low-resolution ADCs has been proposed as a way to mitigate the high energy cost of ADCs by reducing the number of converters and their resolutions. To elaborate, under hybrid beam-forming, the receiver terminals in MIMO systems use analog beam-formers to linearly combine the large number of signals at the receiver antennas and feed them to a small set of low-resolution ADCs.
  There has been a large body of work on the design of energy-efficient transceiver architectures and coding strategies using hybrid beam-forming with a small number of ADCs for communication in mm-wave MIMO systems \cite{molisch2017hybrid,heath2016overview,alkhateeb2014mimo,nossek2006capacity,abbasISIT2018,rini2017general,mezghani2009transmit,khalili2020throughput,dutta2020capacity}. The communication setup considered in these works is discussed briefly in Section \ref{sec:form}.

In this work,  we consider the use of nonlinear analog operations as a way to mitigate the rate-loss due to the use of low-resolution ADCs. 
To explain the aforementioned rate gains, let us consider a simple single-input single-output (SISO) scenario operating in the high signal-to-noise ratio (SNR) regime, i.e. $Y\approx X$. Assume that the receiver is equipped with two one-bit threshold ADCs. Then, as shown in Figure \ref{fig:1}(a), it can receive at most three different messages per channel-use by performing two threshold comparisons, e.g. comparisons with threshold zero $Y\lessgtr 0$ and threshold one $Y \lessgtr 1$ , hence achieving a rate of $R=\log{3}$ bits/channel-use. Alternatively, if the receiver has access to the second power, $Y^2$, of the channel output, then it can use the two comparators $Y\lessgtr 0$ and $Y^2 \lessgtr 1$ as shown in Figure \ref{fig:1}(b) to achieve $R=2$ bits/channel-use, hence improving performance. We investigate the set of achievable regions under general assumptions on the number of transmit antennas, receive antennas, one-bit ADCs, channel SNR, and the values of $k$ in $Y^k$ which can be produced using analog circuits. We provide several communication strategies and derive the resulting achievable regions in various scenarios described in Section \ref{sec:form}.  
To justify the feasibility of the nonlinear analog operations studied in this work, we show through simulation of a circuit whose design is explained in Section \ref{sec:cir}, that one can implement ADCs which operate by comparing the $k$th power $Y^k$ of the input to a set of thresholds, where $k>1$, without significant increase in power consumption. At a high level, the proposed circuit operates on the higher harmonics of its input signal to extract the $k$th power of the amplitude, and the value of $k$ is bounded from above due to practical restrictions in circuit design. 
Circuit simulations exhibiting the generation of nonlinear analog operations are provided.  
\begin{figure}[t!]
\centering
\includegraphics[width=0.8\linewidth]{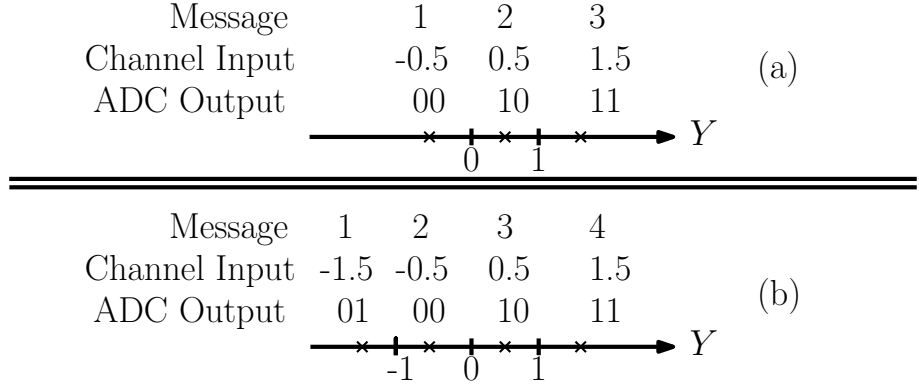}
\caption{(a) the transmitter sends $X$ with amplitudes $-0.5,0.5,1.5$ to send messages $1,2$ and $3$, respectively, and receiver uses two one-bit threshold ADCs $Y\lessgtr 0$ and $Y\lessgtr 1$, and (b)  the transmitter sends $X$ with amplitudes $-1.5, -0.5,0.5,1.5$ to send messages $1,2,3$ and $4$, respectively, and receiver uses two one-bite threshold ADCs $Y\lessgtr 0$ and $Y^2\lessgtr1$.
\vspace{-0.25in}}
\label{fig:1}
\end{figure}

{\em Notation:}
 The set $\{1,2,\cdots, n\}, n\in \mathbb{N}$ is represented by $[n]$. 
The  $n$-length vector $(x_1,x_2,\hdots, x_n)$ is written as $x(1\!\!:\!\!n)$ and $x^n$, interchangeably, and $(x_k,x_{k+1},\cdots,x_n)$ is denoted by $x(k:n)$. The $i$th element is written as $x(i)$ and $x_i$, interchangeably. We write $||\cdot||_2$ to denote the $L_2$-norm. An $n\times m$ matrix is written as $h(1\!\!:\!\!n,1\!\!:\!\!m)=[h_{i,j}]_{i,j\in [n]\times [m]}$,
, its $i$th column is $h(:,i), i\in [m]$, and its $j$th row is $h(j,:), j\in [m]$. We write $\mathbf{x}$ and $\mathbf{h}$ instead of $x(1\!\!:\!\!n)$ and $h(1\!\!:\!\!n,1\!\!:\!\!m)$, respectively, when the dimension is clear from context. Sets are denoted by calligraphic letters such as $\mathcal{X}$, families of sets by sans-serif letters such as $\mathsf{X}$, and collections of families of sets by $\mathscr{X}$. For the region $\mathcal{A}\in \mathbb{R}^n$, the set $\partial\mathcal{A}_k$ denotes its boundary. $\mathbb{B}$ denotes the Borel $\sigma$-field.

\section{System Model}
\label{sec:form}
We consider a MIMO communication channel characterized by the triple $(n_t,n_r, \mathbf{h})$, where $n_t$ is the number of transmitter antennas, $n_r$ is the number of receiver antennas, and $\mathbf{h}\in \mathbb{R}^{n_t\times n_r}$ is the (fixed) channel gain matrix. It is assumed that the transmitter and receiver have prefect knowledge of $\mathbf{h}$.
The channel input and output $(\mathbf{X}, \mathbf{Y})\in \mathbb{R}^{n_t}\times \mathbb{R}^{n_r}$ are related through $
\mathbf{Y}=\mathbf{h}\mathbf{X}+\mathbf{N}$, 
where $\mathbf{N}\in \mathbb{R}^{n_r}$ is a vector of independent and identically distributed Gaussian variables with unit variance and zero mean, and the channel input has average power constraint $P$, i.e. $\frac{1}{n_t}\sum_{i=1}^{n_t}\mathbb{E}(X^2_i)\leq P$.
 \begin{figure}[t]
\centering 
\includegraphics[width=0.8\linewidth]{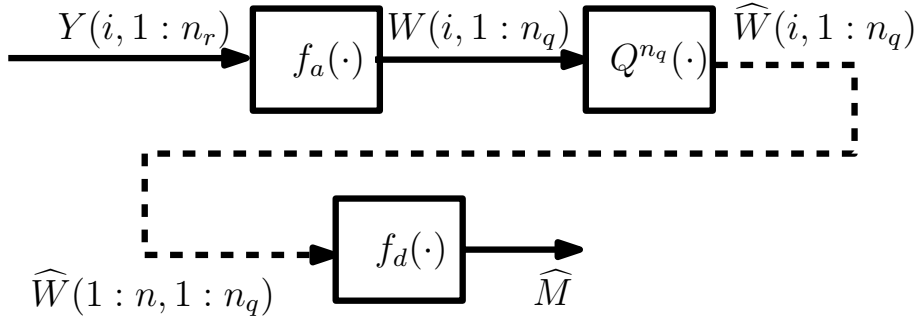}
\caption{The receiver architecture consists of an analog processing module $f_a(\cdot)$, $n_q$ one-bit threshold ADCs $Q^{n_q}(\cdot)$, and a digital processing module $f_d(\cdot)$. The function $f_a(\cdot)$ is restricted to compositions of linear analog combiners and specific  nonlinear analog functions.
}
\vspace{-.25in}
\label{fig:receiver}
\end{figure}

Let the message $M$ be chosen randomly and uniformly from $[\Theta]$, where $\Theta \in \mathbb{N}$. The communication blocklength is $n\in \mathbb{N}$ and  the communication rate is $\frac{1}{n}\log{\Theta}$. The transmitter produces $X(1\!\!:\!\!n,1\!\!:\!\!n_t)= e(M)$, where $e:[\Theta]\to \mathbb{R}^{n\times n_t}$ is the encoding function. At the $i$th channel-use, the vector $X(i,1\!\!:\!\!n_t), i\in [n]$ is transmitted and the receiver receives $Y(i,1\!:\!n_t)=\mathbf{h}X(i,1\!:\!n_t)+N(i,1\!\!:\!\!n_t)$. The receiver produces the message reconstruction $\widehat{M}=d(Y(1\!\!:\!\!n,1\!\!:\!\!n_r))$, where $d: \mathbb{R}^{n\times n_r}\to[\Theta]$ is the decoding function. In this work, we restrict the choice of decoding functions by considering the limitations on number of available one-bit threshold ADCs, $n_q\in \mathbb{N}$, and the set of \textit{implementable analog functions} {$\mathcal{F}_a$} at the receiver. This is described in more detail in the following. 

In its most general form, the receiver (Figure \ref{fig:receiver})  consists of three components: i) an analog processing module captured by $f_a: \mathbb{R}^{n_r}\to \mathbb{R}^{n_q}$ operating on each channel output $Y(i,1:n_r),i\in [n]$, ii) a set of $n_q$ one-bit threshold ADCs with threshold vector $t(1\!\!:\!\!n_q)\in \mathbb{R}^{n_q}$ captured by $Q_{t^{n_q}}^{n_q}:\mathbb{R}^{n_q}\to \{0,1\}^{n_q}$ operating on each output of the analog processing module $W(i,1:n_q)$, and iii) a digital processing module captured by $f_d:\{0,1\}^{n\times n_q}\to [\Theta]$, operating on the block of ADC outputs $\widehat{W}(1:n,1:n_q)$.

After the $i$th channel-use, the analog processing module processes the received signal $Y(i,1\!\!:\!\!n_r)$ in the analog domain and produces $W(i,1\!\!:\!\!n_q)=f_a(Y(i,1\!\!:\!\!n_r)), i\in [n]$. The choice of $f_a(\cdot)$ is restricted to the set of implementable analog functions $\mathcal{F}_a$ and is discussed further in the sequel. The output $W(i,1\!:\!n_q), i\in [n]$ is fed to the one-bit threshold ADCs which produce the discretized vector \[\widehat{W}(i,1\!\!:\!\!n_q)=Q_{t^{n_q}}^{n_q}(W(i,1\!\!:\!\!n_q))=
(W(i,j)\lessgtr t(j),j\in [n_q]),\] where $W(i,j)\lessgtr t(j)$ denotes the indicator function $\mathbbm{1}\{W(i,j)>t(j)\}$. After the $n$th channel-use, the digital processing module produces the message reconstruction $\widehat{M}=f_d(\widehat{W}(1\!\!:\!\!n,1\!\!:\!\!n_q))$. The communication system is characterized by $(n_t,n_r,\mathbf{h}, n_q, \mathcal{F}_a)$, and the transmission system by $(n,\Theta,e,f_a,f_d)$, where $f_a\in \mathcal{F}_a$. Achievability and probability of error are defined in the standard information-theoretic sense. The capacity maximized over all implementable analog functions is denoted by $C_{Q}(n_t,n_r,\mathbf{h},P,n_q, {\mathcal{F}_a})$.

\begin{Remark}
We have considered channels with real-valued inputs with one-bit ADCs used at the receiver. The analysis may be extended to complex-valued inputs and low-resolution ADCs with more than one-bit output length in a straightforward manner.
\end{Remark}
\begin{Remark}
We have considered MIMO systems with low-resolution ADCs under average input power constraints.
Peak power constraints have also been considered \cite{witsenhausen1980some,khalili2018mimo}
under specific restrictions on $\mathcal{F}_a$ such as linearity of the analog processing function.
\end{Remark}

The following sets of implementable analog functions $\mathcal{F}_a$ have been considered in prior works:
\begin{itemize}[leftmargin=*]
    \item \textbf{Scenario I: No Analog Processing \cite{singh2009limits}:} $\mathcal{F}^I_a$ consists of the single (trivial) function $f_a(\mathbf{Y})=\mathbf{Y}$, and we must have $n_r=n_q$, i.e. digital beamforming. It was shown that when the signal is not processed in the analog domain, binary antipodal signaling is optimum in all SNRs.
    \item \textbf{Scenario II: Linear Analog Processing \cite{khalili2021mimo,mo2015capacity, alkhateeb2014mimo}:} $\mathcal{F}^{II}_a$ consists of linear functions $f_a(\mathbf{Y})=\mathbf{V}\mathbf{Y}, \mathbf{V}\in \mathbb{R}^{n_q\times n_r}$.  It was shown that linear analog processing prior to the ADC module increases the high SNR capacity, and the gains are further augmented with the use of analog delay elements.   
    \end{itemize}

   We generalize Scenario II, where $f_a:\mathbb{R}^{n_r}\to \mathbb{R}^{n_q}$ consists of $n_q$ real-valued linear functions of $\mathbf{Y}$, 
    and consider scenarios where $f_a(\cdot)$ consists of $n_q$ real-valued polynomial functions of $\mathbf{Y}$. In particular, we consider
    the following scenarios:
    \begin{itemize}[leftmargin=*]
    \item \textbf{Scenario III:  Polynomial Functions with Arbitrary Degree:}  $\mathcal{F}^{III}_a$ consists of vectors of functions $f_a(\mathbf{Y})=(f_{a,1}(\mathbf{Y}),f_{a,2}(\mathbf{Y}),\cdots,f_{a,n_q}(\mathbf{Y}))$, where $f_{a,i}(\mathbf{Y})\in \mathcal{P}(\mathbb{R}^{n_r}), i\in [n_q]$, and $\mathcal{P}(\mathbb{R}^{n_r})$ is the space of  all finite-degree polynomials from $\mathbb{R}^{n_r}$ to $\mathbb{R}$. That is:
    \[f_{a,i}(\mathbf{Y})= \sum_{\substack{ k_1,k_2,\cdots,k_{n_r}\\ k_1+k_2+\cdots +k_{n_r}\leq t}} b_{k^{n_r},i}\prod_{j=1}^{n_r}Y^{k_j}(j), \]
    where $b_{k^{n_r},i}\in \mathbb{R}, (k_1,k_2,\cdots,k_{n_r})\in \mathbb{R}^{n_r}, i\in [n_q], t\in \mathbb{N}$.
    \item \textbf{Scenario IV:  Polynomial Functions with Bounded Degree:}  $\mathcal{F}^{IV}_a$ consists of vectors of functions $f_a(\mathbf{Y})=(f_{a,1}(\mathbf{Y}),f_{a,2}(\mathbf{Y}),\cdots,f_{a,n_q}(\mathbf{Y}))$, where $f_{a,i}(\mathbf{Y})\in \mathcal{P}_d(\mathbb{R}^{n_r}), i\in [n_q]$, and $\mathcal{P}_d(\mathbb{R}^{n_r})$ is the space of  all polynomials with degree at most $d\in \mathbb{N}$.
    \item \textbf{Scenario V:  Implementable Quadratic Functions:}   $\mathcal{F}^{V}_a$ consists of vectors of functions $f_a(\mathbf{Y})=(f_{a,1}(\mathbf{Y}),f_{a,2}(\mathbf{Y}),\cdots,f_{a,n_q}(\mathbf{Y}))$, where $f_{a,i}(\mathbf{Y})$ is generated by the functions $(Y_1,Y_2,\cdots,$ $Y_{n_r}, Y_1^2+Y_2^2+\cdots+Y_{n_r}^2)$. That is,
    \[f_{a,i}(\mathbf{Y})= \sum_{k=1}^{n_r}a_{k,i}Y(k)+ a_{n_r+1,i}\sum_{k=1}^{n_r}Y^2(k), \]
    where $a_{k,i}\in \mathbb{R}, k\in [n_q], i\in [n_r+1]$.
\end{itemize}

Scenario III is an ideal scenario, where analog circuits can be used to generate any arbitrary polynomial function. This scenario is investigated in Section \ref{sec:sen:III}, where the achievable rate region is characterized, and a computable inner bound is provided. However, we argue that this ideal scenario cannot be implemented in practice due to the limitations of analog circuitry which prohibits implementation of high degree polynomial functions. Scenario IV limits the degree of the polynomial function by an integer $d\in \mathbb{N}$. This scenario is investigated in Section \ref{sec:sen:IV}, where the high SNR achievable region is derived, and it is shown that under specific conditions on the number of available one-bit ADCs, the achievable region approaches that of scenario III, i.e. optimal achievable rate.  Scenario V is a special case of Scenario IV, where the function is restricted to specific quadratic polynomials. The achievable rate region for this scenario is characterized for all SNRs in Section \ref{sec:sen:V}. It can be noted that $\mathcal{F}^{I}_a\subset\mathcal{F}^{II}_a\subset\mathcal{F}^{V}_a\subset \mathcal{F}^{IV}_a\subset \mathcal{F}^{III}_a$.

\section{Communication Strategies and Achievable Rates}
In this section, we consider Scenarios III, IV, and V described in Section \ref{sec:form}, and derive the achievable rate region in each case under specific assumptions on the number of available one-bit ADCs $n_q$, and the channel SNR.
\subsection{Scenario III: Polynomials with Arbitrary Degree}
\label{sec:sen:III}
Using the Stone-Weierstrass Theorem on uniform approximation of continuous functions over compact sets (e.g.\cite{de1959stone}), we show that the decoding operation in this scenario is equivalent to a two-step decoding process, where i) the channel output $Y(i,1\!\!:\!\!n_r), i\in [n]$ is discretized with arbitrary discretization bins, so that $\widehat{W}(i,1\!\!:\!\!n_q)= Q'(Y(i,1\!\!:\!\!n_r))$, where $Q':\mathbb{R}^{n_r}\to \{0,1\}^{n_q}$ is an arbitrary function, 
and ii) the discretization indices $\widehat{W}(1\!\!:\!\!n,1\!\!:\!\!n_q)$ are processed jointly to reconstruct the message $\widehat{M}=f_d(\widehat{W}(1\!\!:\!\!n,1\!\!:\!\!n_q))$. This is formalized as follows. 
\begin{Theorem}
\label{th:1}
Let $P>0$, $n_t,n_r,n_q\in \mathbb{N}$, $\mathbf{h}\in \mathbb{R}^{n_t\times n_r}$, and  $X^{n_t}$ be defined on the  probability space  $(\mathbb{R}^{n_t},\mathbb{B}^{n_t},P_{X^{n_t}})$ satisfying the average power constraint $\frac{1}{n_t}\sum_{i=1}^{n_t}\mathbb{E}(X(i))\leq P$.
Then:
\begin{align*}
    C_Q(n_t,n_r,\mathbf{h},P,n_q,\mathcal{F}^{III}_a)\geq  \sup_{\mathsf{A}\in \mathscr{A}_{Rank(\mathbf{h}),n_q}} I_{\mathsf{A}}(X^{n_t};V),
\end{align*}
where $Rank(\mathbf{h})$ is the rank of matrix $\mathbf{h}$, $\mathscr{A}_{Rank(\mathbf{h}),n_q}$ is the set of all possible partitions of $\mathbb{R}^{Rank(\mathbf{h})}$ into $2^{n_q}$ connected regions, $V$ is defined on $[2^{n_q}]$, and the mutual information $I_{\mathsf{A}}(X^{n_t};V)$ is evaluated with respect to $P_{\mathsf{A}}(X^{n_t},V)$ such that:
\begin{align*}
   P_{X^{n_t},V}(\mathcal{C},k)=P(X^{n_t}\in \mathcal{C}, Y\in \mathcal{A}_{k}), \mathcal{C}\in \mathbb{B}^{n_t}, k\in [2^{n_q}]
\end{align*}
where $\mathcal{A}_k$ is the $k$th partition element in $\mathsf{A}$.
\end{Theorem}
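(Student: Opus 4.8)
The plan is to show that the Scenario III receiver can emulate, up to arbitrarily small error probability, the idealized quantizer that merely reports which cell of a prescribed partition $\mathsf A\in\mathscr A_{Rank(\mathbf h),n_q}$ the relevant part of the channel output lands in, and then to invoke the channel coding theorem for a channel with finite output alphabet under the input cost constraint. Fix the input law $P_{X^{n_t}}$ and a partition $\mathsf A=\{\mathcal A_1,\dots,\mathcal A_{2^{n_q}}\}$ of $\mathbb R^{r}$, where $r:=Rank(\mathbf h)$. Let $U_1\in\mathbb R^{n_r\times r}$ have columns forming an orthonormal basis of the column space of $\mathbf h$ and set $\widetilde{\mathbf Y}:=U_1^{\mathsf T}\mathbf Y=U_1^{\mathsf T}\mathbf h\,\mathbf X+\widetilde{\mathbf N}\in\mathbb R^{r}$ with $\widetilde{\mathbf N}\sim\mathcal N(0,I_r)$; the coordinates of $\mathbf Y$ orthogonal to the column space of $\mathbf h$ are a function of $\mathbf N$ alone and are independent of $(\mathbf X,\widetilde{\mathbf Y})$, so $\widetilde{\mathbf Y}$ is a \emph{sufficient statistic} and the law $P_{X^{n_t},V}$ in the statement is exactly the joint law of $(X^{n_t},V)$ under $V=k\iff\widetilde{\mathbf Y}\in\mathcal A_k$. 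Since $\widetilde{\mathbf Y}$ is a linear image of $\mathbf Y$ and the composition of a polynomial with a linear map is a polynomial, it suffices to realize (approximately) the partition $\mathsf A$ of $\mathbb R^{r}$ as a function of $\widetilde{\mathbf Y}$.

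Next I would realize an \emph{arbitrary} such partition by polynomial threshold tests. Fix a bijection $\sigma:[2^{n_q}]\to\{0,1\}^{n_q}$ and set $\mathcal U_i:=\bigcup_{k:\,\sigma(k)_i=1}\mathcal A_k$ for $i\in[n_q]$, so that the bit-vector $\big(\mathbbm 1\{\widetilde{\mathbf Y}\in\mathcal U_i\}\big)_{i\in[n_q]}$ equals $\sigma(V)$ and hence determines $V$. Given $\epsilon>0$: (i) using $\mathbb E\,\|\widetilde{\mathbf Y}\|_2^2<\infty$ and Markov's inequality, choose a closed ball $B_R\subset\mathbb R^{r}$ with $P(\widetilde{\mathbf Y}\notin B_R)<\epsilon$; (ii) by inner regularity of the finite Borel measure $P_{\widetilde{\mathbf Y}}$, choose disjoint compact sets $K_i^1\subseteq\mathcal U_i\cap B_R$ and $K_i^0\subseteq\mathcal U_i^{\,c}\cap B_R$ with $P\big(B_R\setminus(K_i^0\cup K_i^1)\big)<\epsilon$; (iii) by Urysohn's lemma, take $g_i:\mathbb R^{r}\to[0,1]$ continuous with $g_i\equiv 1$ on $K_i^1$ and $g_i\equiv 0$ on $K_i^0$; (iv) by the Stone--Weierstrass theorem, take a polynomial $p_i$ with $\sup_{B_R}|p_i-g_i|<\tfrac14$. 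Then $f_{a,i}(\mathbf y):=p_i(U_1^{\mathsf T}\mathbf y)\in\mathcal P(\mathbb R^{n_r})$ with threshold $t_i=\tfrac12$ yields an ADC output $\widehat W(i,\cdot)=\mathbbm 1\{p_i(\widetilde{\mathbf Y})>\tfrac12\}$ that agrees with $\mathbbm 1\{\widetilde{\mathbf Y}\in\mathcal U_i\}$ on $K_i^0\cup K_i^1$; consequently the induced per-symbol output $\widehat V:=\sigma^{-1}\big(\widehat W(\cdot,1\!\!:\!\!n_q)\big)$ differs from $V$ only on $\{\widetilde{\mathbf Y}\notin B_R\}\cup\bigcup_{i\in[n_q]}\big(B_R\setminus(K_i^0\cup K_i^1)\big)$, an event of probability at most $(n_q+1)\epsilon$. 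Since $f_a$, the ADCs, and the digital module act symbol-by-symbol while $\mathbf N$ is i.i.d.\ across channel uses, this $f_a$ induces a memoryless channel $P_{\widehat V\,|\,X^{n_t}}$ with output alphabet $[2^{n_q}]$.

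It then remains to code over this channel. For $P_{\widehat V\,|\,X^{n_t}}$ under the input cost constraint of the model, the channel coding theorem for cost-constrained channels makes every rate below $I(X^{n_t};\widehat V)$ achievable with the law $P_{X^{n_t}}$, hence a legitimate achievable rate for $(n_t,n_r,\mathbf h,n_q,\mathcal F^{III}_a)$. Because $\widehat V$ and $V$ take values in the fixed finite set $[2^{n_q}]$ and disagree with probability at most $(n_q+1)\epsilon$, continuity of entropy on a finite alphabet (applied to $H(\widehat V)$ and, after integrating over $X^{n_t}$ and using concavity of the binary entropy, to $H(\widehat V\mid X^{n_t})$) gives $|I(X^{n_t};\widehat V)-I(X^{n_t};V)|\to 0$ as $\epsilon\downarrow 0$. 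Therefore every rate strictly below $I_{\mathsf A}(X^{n_t};V)=I(X^{n_t};V)$ is achievable, so $C_Q(n_t,n_r,\mathbf h,P,n_q,\mathcal F^{III}_a)\ge I_{\mathsf A}(X^{n_t};V)$; taking the supremum over $\mathsf A\in\mathscr A_{r,n_q}$ completes the proof.

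The main obstacle is the realization step: converting an \emph{arbitrary} measurable partition of $\mathbb R^{r}$ (whose cells may have very irregular boundaries) into a partition cut out by $n_q$ polynomial threshold tests with vanishing misclassification probability. This requires chaining three approximations --- compactification via a moment bound, a measure-regularity plus Urysohn step that replaces the discontinuous indicators $\mathbbm 1_{\mathcal U_i}$ by continuous functions equal to them off a set of small probability, and a Stone--Weierstrass polynomial approximation of those continuous functions on the compact set $B_R$ --- arranged so that the union bound over the $n_q$ coordinates keeps the total error $o(1)$; the residual mass outside $B_R$ is absorbed into the error. A secondary point needing care is that the reference measure $P_{\widetilde{\mathbf Y}}$ depends on the (fixed) input law, so $P_{X^{n_t}}$ must be fixed before $B_R$, $K_i^0$, $K_i^1$, and the $p_i$ are chosen.
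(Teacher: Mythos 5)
Your proposal is correct and follows essentially the same architecture as the paper's proof: binary-encode the cell index of the partition so that each of the $n_q$ ADCs computes one bit, compactify via a moment bound and Markov's inequality, apply Stone--Weierstrass on the resulting compact set to get polynomial analog functions, and finish by a continuity-of-mutual-information argument together with the standard coding theorem for the induced finite-output memoryless channel. The one genuine local difference is the intermediate continuous function that the polynomials approximate: the paper thresholds the signed distance $ (-1)^{mod_{2^j}(k)}\|y^{n_r}-\partial\mathcal{A}_k\|_2$ to the region boundaries, whereas you approximate the indicators $\mathbbm{1}_{\mathcal{U}_i}$ of unions of cells via inner regularity of $P_{\widetilde{\mathbf Y}}$ plus Urysohn, and threshold at $1/2$. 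Your variant is slightly more robust: it controls the misclassification event directly on compact sets where the target is exactly $0$ or $1$, so the sign of the polynomial is stable under a uniform $1/4$-approximation, while the paper's route implicitly needs the set where $|f_j|$ is small (a neighborhood of the region boundaries) to carry small probability under the law of $\mathbf Y$; your version also makes the rank-reduction step via $U_1^{\mathsf T}\mathbf Y$ explicit where the paper only asserts it "follows by similar arguments." Both yield the same bound; the paper's construction is shorter when the partition cells have negligible boundaries, and yours extends verbatim to arbitrary Borel partitions.
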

\begin{proof}
Please refer to Appendix \ref{App:th:1}.
\end{proof}

\begin{Remark}
Using lower semi-continuity of mutual information, and data processing inequality \cite{Pinsker}, it follows that as $n_q\to \infty$, the capacity approaches that of continuous-output Gaussian channels. That is,
$ \lim_{n_q\to \infty}C(n_t,n_r,\mathbf{h},P,n_q,\mathcal{F}^{III}_a)$ is equal to the capacity of the continuous-output Gaussian channel $\mathbf{Y}=\mathbf{h}\mathbf{X}+\mathbf{N}$.
\end{Remark}

Theorem \ref{th:1} provides a lower-bound on the capacity, however, this bound is not necessarily computable since it requires optimization over all partitions in  $\mathscr{A}_{Rank(\mathbf{h}),n_q}$. The following theorem provides a computable inner-bound to the one given in Theorem \ref{th:1}. The theorem uses the singular value decomposition (SVD) in the analog domain to transform the communication system into $s=Rank(\mathbf{h})$ parallel, non-interfering channels.  The $i$th parallel channel is allocated a number of $n_{q,i}$ one-bit ADCs, where $\sum_{i\in [s]}n_{q,i}=n_q$, and is allocated $\frac{P_i}{P}$ fraction of the power budget, where $\sum_{i\in [s]}P_i=P$.

\begin{Theorem}
\label{th:2}
Let $P>0$, $n_t,n_r,n_q\in \mathbb{N}$, $\mathbf{h}\in \mathbb{R}^{n_t\times n_r}$
Then:
\begin{align}
   & \label{eq:th:2} C_Q(n_t,n_r,\mathbf{h},P,n_q,\mathcal{F}^{III}_a)\geq 
   \\&\qquad \qquad \qquad  \max_{(n_{q,i}, i\in [s])\in \mathcal{N}}\max_{\substack{(P_i, i\in [s])\in \mathcal{P}}}\sup_{\mathsf{A_i}\in \mathscr{A}_{1,n_{q,i}}}\sup_{P_{\widetilde{X}^s}} \sum_{k =1}^{s} I_{\mathsf{A}_i}(\widetilde{X}_k;V_k),\nonumber
\end{align}
where  $\mathcal{N}\triangleq\{(n_{q,i},i\in [s]:\sum_{i\in [s]}n_{q,i}=n_q\}$,  $\mathcal{P}\triangleq\{(P_i,i\in [s]:\sum_{i\in[s]}P_{i}=P\}$, $\widetilde{Y}_k = \sigma_{k} \widetilde{X}_k+N_k$, $N^s$ is a vector of i.i.d. zero-mean Gaussian variables with unit variance, $\sigma_k$ is the $k$th eigenvalue of $\mathbf{h}$, and the mutual information is evaluated with respect to $P^{\mathsf{A}_i}_{X^{n_t},V_k}$ such that:
\begin{align*}
   P^{\mathsf{A}_i}_{\widetilde{X}_k,V_k}(\mathcal{C} ,\ell)=P(X^{n_t}\!\in\! \mathcal{C}, Y^{n_r}\!\in\! \mathcal{A}_{\ell}), \mathcal{C}\!\in\! \mathbb{B}^{n_t}, k\in [2^{n_i}], \ell \in [2^{n_{q_i}}].
\end{align*}
\end{Theorem}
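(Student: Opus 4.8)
The plan is to specialize the lower bound of Theorem~\ref{th:1} to a family of partitions and input distributions adapted to the singular value decomposition of $\mathbf{h}$, and then to exploit the additivity of mutual information that this structure produces.

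First I would fix an SVD of $\mathbf{h}$ with orthogonal factors $\mathbf{U},\mathbf{W}$ and nonzero singular values $\sigma_1,\dots,\sigma_s$, $s=Rank(\mathbf{h})$. Taking the transmit signal to be $\mathbf{X}=\mathbf{W}\widetilde{\mathbf{X}}$ with $\widetilde{X}_k=0$ for $k>s$, and applying the analog linear map $\mathbf{Y}\mapsto\mathbf{U}^{T}\mathbf{Y}$ (a degree-one polynomial, hence an admissible operation, though formally this step is already subsumed by Theorem~\ref{th:1}), the channel decomposes into the $s$ parallel scalar Gaussian channels $\widetilde{Y}_k=\sigma_k\widetilde{X}_k+N_k$, $k\in[s]$, together with $n_r-s$ coordinates carrying only noise, which are ignored. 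Here I use that an orthogonal map sends a vector of i.i.d.\ unit-variance Gaussians to a vector of i.i.d.\ unit-variance Gaussians and preserves the $L_2$-norm, so the power budget is preserved. In the language of Theorem~\ref{th:1} this simply amounts to working in coordinates on the reduced output space $\mathbb{R}^{s}$ in which the channel is diagonal, with the input distribution supported on the associated rank-$s$ precoding subspace.

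Next, fixing an ADC allocation $(n_{q,k})_{k\in[s]}\in\mathcal{N}$ and a power allocation $(P_k)_{k\in[s]}\in\mathcal{P}$, I would restrict the supremum in Theorem~\ref{th:1} to: (i) product input distributions $P_{\widetilde{X}^{s}}=\prod_{k=1}^{s}P_{\widetilde{X}_k}$ whose per-sub-channel powers respect $(P_k)$; and (ii) product partitions $\mathsf{A}=\mathsf{A}_1\times\cdots\times\mathsf{A}_s$ with $\mathsf{A}_k\in\mathscr{A}_{1,n_{q,k}}$ a partition of $\mathbb{R}$ into $2^{n_{q,k}}$ intervals. Since each cell of $\mathsf{A}$ is a box (hence connected) and there are $\prod_{k}2^{n_{q,k}}=2^{n_q}$ of them, $\mathsf{A}\in\mathscr{A}_{s,n_q}$, so this is a feasible choice in Theorem~\ref{th:1}. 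Under these product structures the discretization index splits as $V=(V_1,\dots,V_s)$, where $V_k\in[2^{n_{q,k}}]$ is the cell of $\mathsf{A}_k$ containing $\widetilde{Y}_k$, and the pairs $\{(\widetilde{X}_k,V_k)\}_{k\in[s]}$ are mutually independent; hence $I_{\mathsf{A}}(X^{n_t};V)=I(\widetilde{X}^{s};V)=\sum_{k=1}^{s}I_{\mathsf{A}_k}(\widetilde{X}_k;V_k)$, which I would verify through $H(V)=\sum_k H(V_k)$ and $H(V\mid\widetilde{X}^{s})=\sum_k H(V_k\mid\widetilde{X}_k)$.

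Finally, Theorem~\ref{th:1} yields $C_Q(\cdot)\ge\sum_{k=1}^{s}I_{\mathsf{A}_k}(\widetilde{X}_k;V_k)$ for every feasible choice of these product parameters; taking the supremum of the right-hand side over the $\mathsf{A}_k$ and $P_{\widetilde{X}^{s}}$ and the maxima over $(n_{q,k})\in\mathcal{N}$ and $(P_k)\in\mathcal{P}$ gives exactly \eqref{eq:th:2}---the supremum over product $P_{\widetilde{X}^{s}}$ coincides with that over arbitrary $P_{\widetilde{X}^{s}}$, since each summand $I_{\mathsf{A}_k}(\widetilde{X}_k;V_k)$, as well as the power constraint, depends only on the marginals $P_{\widetilde{X}_k}$. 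The conceptual content is modest; the genuinely delicate points are bookkeeping: checking that the orthogonal pre-/post-processing preserves the Gaussian noise law and the (per-antenna) power constraint, confirming that a Cartesian product of interval partitions indeed lies in $\mathscr{A}_{s,n_q}$, and writing the independence/additivity argument carefully. The payoff relative to Theorem~\ref{th:1} is computability: a single optimization over partitions of $\mathbb{R}^{s}$ is replaced by $s$ one-dimensional partition optimizations plus finite-dimensional ADC and power allocations.
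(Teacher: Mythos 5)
Your proposal is correct and takes essentially the same route as the paper: the paper omits the detailed argument but states that the proof is an SVD-based reduction to $s=Rank(\mathbf{h})$ parallel scalar channels with per-channel ADC and power allocations, combined with Theorem \ref{th:1} (deferring the parallelization bookkeeping to Theorem 3 of \cite{khalili2021mimo}). Your product-partition and product-input additivity argument, together with the observation that a Cartesian product of interval partitions is a valid connected partition in $\mathscr{A}_{s,n_q}$, is exactly the content being invoked there.
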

The proof follows by similar arguments as Theorem 3 of \cite{khalili2021mimo} along with proof of Theorem \ref{th:1} and is omitted for brevity.
\begin{Remark}
The first two maximizations in Equation \eqref{eq:th:2} are over finite sets, and the two supremums are taken over all one dimensional partitions along with the distribution of $X$. The latter optimization has been studied extensively in the literature, and it was shown that the maximum rate is achieved by putting the mass of $P_X$ on a finite number of at most $2^{n_q}$ points \cite{witsenhausen1980some,singh2009limits}.
\end{Remark}
\subsection{Scenario IV: Polynomials with Bounded Degree}
\label{sec:sen:IV}
In this scenario, we consider a special case of Scenario IV, where the analog functions are restricted to specific quadratic polynomials. In particular, the function is generated by $\{x_1,x_2,\cdots,x_{Rank(\mathbf{h})},\sum_{i=1}^{Rank(\mathbf{h})}x_i^2\}$. We have the following theorem for the high SNR capacity.
\begin{Theorem}
Let $n_t,n_r,n_q\in \mathbb{N}$, $\mathbf{h}\in \mathbb{R}^{n_t\times n_r}$ and let the maximum polynomial degree in $\mathcal{F}^{IV}$ be  $d\in \mathbb{N}$. Then,
\begin{align*}
   & \lim_{P\to \infty} C_Q(n_t,n_r,\mathbf{h},P,n_q,\mathcal{F}^{IV}_a)\geq 
 \max\left(n_q,\log{Rank(\mathbf{h})+d\choose d}\right),\nonumber
\end{align*}
\end{Theorem}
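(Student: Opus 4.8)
The plan is to establish the two lower bounds $C_Q\ge n_q$ and $C_Q\ge\log\binom{Rank(\mathbf{h})+d}{d}$ by two separate high-SNR coding schemes and, since the capacity exceeds every achievable rate, conclude that it exceeds the larger of the two. Write $s\triangleq Rank(\mathbf{h})$. As in Theorem~\ref{th:2}, I would first apply the SVD of $\mathbf{h}$ in the analog domain; because the SVD combiners are linear (degree $1\le d$) they lie in $\mathcal{F}^{IV}_a$, reducing the system to $s$ parallel channels $\widetilde Y_k=\sigma_k\widetilde X_k+N_k$. As $P\to\infty$ the useful components $\sigma_k\widetilde X_k$ may be driven arbitrarily far apart, so the same typicality/continuity argument used in the proof of Theorem~\ref{th:1} shows that the high-SNR achievable rate equals the logarithm of the number of constellation points that the $n_q$ degree-$d$ polynomial comparators place in distinct decision cells, with error probability vanishing as the minimum inter-point separation grows. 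It therefore suffices, for each target rate, to exhibit a constellation in $\mathbb{R}^{s}$ together with $n_q$ degree-$d$ thresholds realizing the intended number of distinct cells.

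The common device is the monomial \emph{lift} $\phi:\mathbb{R}^{s}\to\mathbb{R}^{D}$, $D=\binom{s+d}{d}$, whose coordinates are all monomials of degree at most $d$. Under $\phi$ each comparator $\{x:f_{a,i}(x)\gtrless t_i\}$ becomes an affine halfspace in the lifted coordinates, so separating points of $\mathbb{R}^{s}$ by degree-$d$ polynomials is equivalent to separating their lifts $\phi(x_j)$ by hyperplanes in $\mathbb{R}^{D-1}$, the constant monomial being absorbed into the thresholds. This reduces both constructions to linear-separation statements for the lifted constellation.

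For $C_Q\ge n_q$ (\textbf{Construction A}) I would place $2^{n_q}$ constellation points and choose the $n_q$ comparators so that the map $x_j\mapsto\big(\mathbbm{1}\{f_{a,i}(x_j)>t_i\}\big)_{i\in[n_q]}$ is a bijection onto $\{0,1\}^{n_q}$; then the $i$-th ADC carries the $i$-th message bit and $n_q$ bits are decoded per channel use. In lifted coordinates this asks that the $n_q$ hyperplanes realize all $2^{n_q}$ sign patterns, which for a generic arrangement in $\mathbb{R}^{D-1}$ holds once $n_q\le D-1=\binom{s+d}{d}-1$, since then the region count $\sum_{i=0}^{D-1}\binom{n_q}{i}$ collapses to $2^{n_q}$. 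For $C_Q\ge\log\binom{s+d}{d}$ (\textbf{Construction B}) I would instead take $N=\binom{s+d}{d}=D$ points whose lifts $\phi(x_1),\dots,\phi(x_N)$ are affinely independent; genericity makes the associated generalized Vandermonde matrix nonsingular, so such a configuration exists and is therefore shattered by affine halfspaces, i.e.\ every dichotomy of $\{x_j\}$ is realized by a single degree-$d$ comparator. Binary-encoding the index $j\in[N]$ and letting the $i$-th ADC report the $i$-th index bit then identifies all $N$ points, yielding rate $\log N=\log\binom{s+d}{d}$. Taking the better of the two schemes gives the maximum.

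The hardest step is controlling the interaction between the comparator count $n_q$ and the lifted dimension $D=\binom{s+d}{d}$. The subtlety in Construction~A is that the lifted points must be genuine images under $\phi$, whose range is only an $s$-dimensional variety inside $\mathbb{R}^{D-1}$, so one must verify that the $n_q$ polynomial surfaces can be chosen to cut this variety into all $2^{n_q}$ nonempty cells, rather than merely cutting the ambient space so; in Construction~B the analogous point is the Vandermonde nonsingularity that guarantees shattering. The delicate conclusion is then to confirm that in each regime the binding construction attains the \emph{larger} of $n_q$ and $\log\binom{s+d}{d}$, and that the genericity (nonemptiness of the cells, affine independence of the lifts) can be arranged simultaneously with sufficient inter-point separation. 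The remaining, routine, ingredient is the high-SNR error analysis converting this combinatorial separation into a vanishing-error statement as $P\to\infty$.
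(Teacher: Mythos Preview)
Your Construction~B---lifting to monomial coordinates, using generic Vandermonde nonsingularity to shatter $N=\binom{s+d}{d}$ points, then binary-encoding the index with $\lceil\log N\rceil\le n_q$ comparators---is precisely the paper's argument; the paper simply phrases it as ``$VCdim(\mathcal{F}^{IV})=\binom{s+d}{d}$'' and cites \cite{anthony1995classification} rather than writing out the lift, and then, exactly as you do, selects from the $2^{N}$ realized dichotomies those $n_q$ polynomials whose signs carry the bits of the point index $t$. On the $\log\binom{s+d}{d}$ half, your route and the paper's coincide.

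Construction~A, by contrast, is not in the paper and has a genuine gap that you yourself flag: the image $\phi(\mathbb{R}^{s})$ is only an $s$-dimensional variety inside $\mathbb{R}^{D-1}$, so the ambient hyperplane region count $\sum_{i\le D-1}\binom{n_q}{i}=2^{n_q}$ is irrelevant---what matters is how many of those cells the variety actually meets. By Warren/Milnor--Thom sign-pattern bounds, $n_q$ degree-$d$ polynomials in $s$ real variables realize at most $O\!\bigl((n_qd)^{s}\bigr)$ sign vectors on $\mathbb{R}^{s}$, which is $\ll 2^{n_q}$ once $n_q$ is large; together with the trivial upper bound $C_Q\le n_q$ (the ADC output lives in $\{0,1\}^{n_q}$) this shows that $C_Q\ge n_q$ cannot hold for all parameter choices. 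The paper's own outline does not attempt a separate $n_q$ bound either: it caps the number of usable messages at $2^{n_q}$ and otherwise argues only the VC-dimension side, so the displayed $\max$ is evidently to be read as $\min$. Your ``delicate conclusion'' about matching the larger of the two is therefore not a technicality to be filled in but an actual obstruction; the correct target is $\lim_{P\to\infty}C_Q\ge\min\bigl(n_q,\log\binom{s+d}{d}\bigr)$, and your Construction~B (equivalently the paper's VCdim argument) already delivers it once you note that any $2^{n_q}$-element subset of a shattered set is again shattered.
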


\textit{Proof Outline.} Following the arguments in Theorem 1 in \cite{khalili2021mimo}, the maximum number of messages which can be transmitted reliably at high SNR is equal to the maximum number of distinct quantization regions which can be produced in $\mathbb{R}^{Rank{(\mathbf{h})}}$ using the analog processing and ADC operation (as long as it is not greater than $2^{n_q}$). We argue that this number is greater than or equal to the  Vapnik-Chervonenkis Dimension ($VCdim$) of $\mathcal{F}^{IV}$. The proof is completed by noting that $VCdim(\mathcal{F}^{IV})= {Rank(\mathbf{h})+d\choose d}$ as shown in \cite{anthony1995classification}. To see the former statement, let $\mathbf{x}_1,\mathbf{x}_2,\cdots, \mathbf{x}_{\ell}\in \mathbb{R}^{Rank(\mathbf{h})}$ be a set of points which are shattered by  $\mathcal{F}^{IV}$, where $\ell=VCdim(\mathcal{F}^{IV})$. Then, by definition of $VCdim$, there exist polynomial discriminants $f_{a,j}(\cdot), j\in [2^{\ell}]$ in $\mathcal{F}^{IV}$ such that $(Q_0(f_{a,j}(\mathbf{x}_1)),Q_0(f_{a,j}(\mathbf{x}_2)),\cdots, Q_0(f_{a,j}(\mathbf{x}_\ell))$ is the binary representation of $j$. Let $f_{a,j_1},f_{a,j_2},\cdots,f_{a,j_{n_q}}$ be such that $sign(f_{a,j_k}(\mathbf{x}_{t}))=(-1)^{mod_{2^k}(t)}, k\in [n_q], t\in [\ell]$. Then, $(Q_0(f_{a,j_1}(\mathbf{x}_t)),Q_0(f_{a,j_2}(\mathbf{x}_t)),\cdots,Q_0(f_{a,j_{n_q}}(\mathbf{x}_t)))$ is the binary representation of $t$. As a result, if $\{\mathbf{x}_1,\mathbf{x}_2,\cdots,\mathbf{x}_\ell\}$ is taken to be the channel input alphabet, and the functions $f_{a,j_1},f_{a,j_2},\cdots,f_{a,j_{n_q}}$ are used as the analog processing functions, the receiver can reconstruct the index $T$ of the transmitted symbol $\mathbf{x}_T$ without error by finding its binary representation as described above. This completes the proof. 

\begin{Remark}
    In the low-SNR regime, one could potentially extend Theorems \ref{th:1} and \ref{th:2} to the bounded polynomial degree scenario considered here by evaluating the error bounds on approximation of
functions using the Bernstein Polynomial (e.g. \cite{sukkrasanti2008error}). 
\end{Remark}
\subsection{Scenario V: Subset of Quadratic Functions}
\label{sec:sen:V}
In this scenario, we consider a special case of Scenario IV, where the analog functions are restricted to specific quadratic polynomials. In particular, the function is generated by $\{x_1,x_2,\cdots,x_{Rank(\mathbf{h})},\sum_{i=1}^{Rank(\mathbf{h})}x_i^2\}$. We have the following theorem for the high SNR capacity.
\begin{Theorem}
\label{th:4}
Let $n_t,n_r,n_q\in \mathbb{N}$, $\mathbf{h}\in \mathbb{R}^{n_t\times n_r}$. Then,
\begin{align*}
   & \lim_{P\to \infty} C_Q(n_t,n_r,\mathbf{h},P,n_q,\mathcal{F}^{V}_a)= \log\left(\sum_{i=0}^{Rank(\mathbf{h})+1} {n_q \choose i}- {n_q-1\choose Rank(\mathbf{h})}\right),\nonumber
\end{align*}
\end{Theorem}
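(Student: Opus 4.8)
\textbf{Proof proposal for Theorem \ref{th:4}.}

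The plan is to follow the same high-SNR philosophy used in the proof outline for Scenario IV: in the limit $P\to\infty$ the noise becomes negligible, so the capacity (in messages per channel use) equals the maximum number of distinct quantization cells that the receiver can carve out of $\mathbb{R}^{s}$, where $s=Rank(\mathbf{h})$, using $n_q$ one-bit ADCs whose inputs are functions from $\mathcal{F}^V_a$. After the analog SVD reduction the effective signal lives in $\mathbb{R}^{s}$, and each ADC computes the sign of a function of the form $\sum_{k=1}^{s} a_k y_k + a_{s+1}\sum_{k=1}^{s} y_k^2 - t$, i.e.\ the sign of an affine function of the lifted coordinates $(y_1,\dots,y_s,\;\sum_k y_k^2)\in\mathbb{R}^{s+1}$. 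So the $n_q$ decision boundaries are $n_q$ hyperplanes in $\mathbb{R}^{s+1}$, but intersected with the $s$-dimensional paraboloid $\Sigma=\{(y_1,\dots,y_s,\sum_k y_k^2)\}$. The problem therefore reduces to a clean combinatorial-geometry count: \emph{what is the maximum number of regions into which $n_q$ hyperplanes partition the paraboloid $\Sigma\subset\mathbb{R}^{s+1}$?}

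For the achievability (the ``$\geq$'' direction), I would first establish that $n_q$ hyperplanes in general position in $\mathbb{R}^{s+1}$ cut the paraboloid $\Sigma$ into exactly $\sum_{i=0}^{s+1}\binom{n_q}{i} - \binom{n_q-1}{s}$ connected pieces. The standard arrangement formula gives $\sum_{i=0}^{s+1}\binom{n_q}{i}$ regions for $n_q$ hyperplanes in $\mathbb{R}^{s+1}$; restricting to the paraboloid loses precisely the regions that do not meet $\Sigma$, and since $\Sigma$ is the graph of a strictly convex function the unbounded region ``below'' it (together with the combinatorial structure of which cells are cut off) contributes the correction term $\binom{n_q-1}{s}$ — this is the point I would verify carefully, e.g.\ by an inductive argument on $n_q$ (adding one hyperplane at a time and counting how many existing paraboloid-cells it subdivides, which is governed by how a $(s-1)$-sphere-like section is cut by $n_q-1$ hyperplanes) or by lifting to projective space. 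Once the cell count is known, choosing one transmit symbol $\mathbf{x}_j$ in the interior of each cell and choosing the $n_q$ analog functions to realize the corresponding hyperplanes makes the decoder error-free at infinite SNR, giving the lower bound on $\lim_{P\to\infty}C_Q$.

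For the converse (the ``$\leq$'' direction), I would argue that \emph{no} configuration of $n_q$ functions from $\mathcal{F}^V_a$ can produce more than this many cells: each ADC boundary is the intersection of $\Sigma$ with a hyperplane, so the induced partition of $\Sigma$ is a sub-partition of an arrangement of $n_q$ hyperplanes on $\Sigma$, and the displayed quantity is the maximum over all such arrangements (attained in general position, diminished in degenerate position). Hence the number of decodable messages is at most this count regardless of SNR, and a Fano-type argument shows that if the number of channel inputs exceeds the number of achievable cells then in the high-SNR limit two inputs must share a cell and become indistinguishable, so the rate cannot exceed $\log$ of the cell count. Combining the two directions yields the stated equality.

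The main obstacle I anticipate is the exact region-count on the paraboloid, i.e.\ pinning down the correction term $\binom{n_q-1}{s}$ rather than merely getting the leading $\sum_{i=0}^{s+1}\binom{n_q}{i}$ upper bound. The subtlety is that the paraboloid is an unbounded hypersurface, so one must carefully track which cells of the ambient hyperplane arrangement fail to intersect $\Sigma$; the cleanest route is probably an induction on $n_q$ that simultaneously tracks the number of cells on $\Sigma$ and the number of cells on a generic affine section (which behaves like an arrangement on a sphere), using the fact that a strictly convex graph meets any line in at most two points. A secondary, more routine point is making the infinite-SNR decoding rigorous — showing that with a finite input constellation placed in the interiors of the cells, the probability of confusing two symbols vanishes as $P\to\infty$ because the Gaussian noise ball shrinks relative to the (fixed) minimum distance from each constellation point to the union of decision boundaries.
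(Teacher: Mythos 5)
Your overall strategy coincides with the paper's: lift via $(y_1,\dots,y_s)\mapsto(y_1,\dots,y_s,\sum_k y_k^2)$ with $s=Rank(\mathbf{h})$, so that every function in $\mathcal{F}^{V}_a$ becomes an affine functional on $\mathbb{R}^{s+1}$, and reduce the high-SNR capacity to counting the cells of an arrangement of $n_q$ hyperplanes that actually meet the paraboloid $\mathcal{L}=\{(y_1,\dots,y_s,\sum_i y_i^2)\}$. The surrounding machinery you describe (general-position arrangements, one constellation point per cell interior, a Fano/minimum-distance argument as $P\to\infty$) is the same as in the paper's outline.

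The genuine gap is precisely the step you defer: the exact region count on the paraboloid, which is the heart of both the achievability and the converse. The paper closes it not by an induction on $n_q$ but by identifying the deficit with the number of \emph{bounded} cells of the arrangement, $\beta_{s+1,n_q}$. Concretely: by convexity of $\mathcal{L}$, every closed convex (bounded) cell $\mathcal{A}_j$ either misses $\mathcal{L}$ or shares a vertex with a cell $\mathcal{A}_{j'}$ that misses $\mathcal{L}$, which yields the upper bound $\sum_{i=0}^{s+1}\binom{n_q}{i}-\beta_{s+1,n_q}$ on the number of cells intersecting $\mathcal{L}$; and by rescaling the arrangement one can force all bounded cells onto one side of $\mathcal{L}$, so this bound is attained. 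The value of $\beta_{s+1,n_q}$ is then read off from the known formula for the number of bounded regions of a generic arrangement (the paper cites Ho and Zimmerman), which also explains the equivalent form $2\sum_{i=0}^{s}\binom{n_q-1}{i}$ given in the paper's remark. Without this identification (or a completed induction), your proposal asserts the correction term $\binom{n_q-1}{s}$ rather than proving it, and neither direction of the stated equality is actually established. I would also note that your converse as written only bounds the number of cells by the maximum over hyperplane arrangements; you still need the ``bounded-cells-miss-the-paraboloid'' argument to show that the ambient count $\sum_{i=0}^{s+1}\binom{n_q}{i}$ is never achievable on $\mathcal{L}$ itself.
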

\textit{Proof Outline.}
The proof follows by noting that each polynomial in  $\mathcal{F}^{IV}$ can be mapped to a hyperplane in the $\mathbb{R}^{Rank(\mathbf{h})+1}$ by considering the mapping $(x_1,x_2,\cdots,x_{Rank(\mathbf{h})})\mapsto (x_1,x_2,$ $\cdots,x_{Rank(\mathbf{h})},\sum_{i=1}^{Rank(\mathbf{h})}x_i^2)$. Furthermore, it is well-known that the number of partition regions in $\mathbb{R}^{Rank(\mathbf{h})+1}$ generated by $n_q$ hyperplanes is $\sum_{i=0}^{Rank(\mathbf{h})+1} {n_q \choose i}$. Also, due to the convexity of the $Rank(\mathbf{h})$-dimensional surface  $\mathcal{L}=\{(x_1,x_2,\cdots,x_{Rank(\mathbf{h})},\sum_{i=1}^{Rank(\mathbf{h})}x_i^2)| x_i\in \mathbb{R}\}$ for every closed and convex partition region (polyhedra) $\mathcal{A}_j$, either $\mathcal{L}\cap \mathcal{A}_j=\phi$, or there is a region $\mathcal{A}_{j'}$ sharing a vertex with $\mathcal{A}_j$ such that $\mathcal{L}\cap \mathcal{A}_{j'}=\phi$. So, the maximum number of partition regions with which $\mathcal{L}$ intersects is at most $\sum_{i=0}^{Rank(\mathbf{h})+1} {n_q \choose i}- \beta_{Rank(\mathbf{h})+1,n_q}$, where $\beta_{Rank(\mathbf{h})+1,n_q}$ is the number of closed (bounded) partition regions. Furthermore, it can be shown that by scaling the partition appropriately, one can ensure that all closed partition regions lie inside $\mathcal{L}$ so that the maximum number of intersecting regions $\sum_{i=0}^{Rank(\mathbf{h})+1} {n_q \choose i}- \beta_{Rank(\mathbf{h})+1,n_q}$ is achieved. The proof is completed by noting that $ \beta_{Rank(\mathbf{h})+1,n_q}={n_q-1\choose Rank(\mathbf{h})}$ as shown in \cite{ho2006number}.
\begin{Remark}
The capacity region derived in Theorem \ref{th:4} can be equivalently stated as  $C(n_t,n_r,\mathbf{h},P,n_q,$ $\mathcal{F}^{V}_a)= \log(\alpha_{Rank(\mathbf{h})+1,n_q})$, where $\alpha_{Rank(\mathbf{h})+1,n_q)}$ is the maximum number of distinct regions generated by $n_q$ hyperplanes passing through the origin in  $\mathbb{R}^{Rank(\mathbf{h})+1}$, and $\alpha_{Rank(\mathbf{h})+1,n_q}=2\sum_{i=0}^{Rank(\mathbf{h})} {n_q-1 \choose i}$. The equality of these two formulas was shown in \cite{ho2006number}.
\end{Remark}
Similar to Theorem \ref{th:2}, one could use SVD to derive the following lower-bound on the low SNR capacity. 
\begin{Theorem}
\label{th:5}
Let $P>0$, $n_t,n_r,n_q\in \mathbb{N}$ $\mathbf{h}\in \mathbb{R}^{n_t\times n_r}$
Then:
\begin{align}
   & \label{eq:th:5} C_Q(n_t,n_r,\mathbf{h},P,n_q,\mathcal{F}^{V}_a)\geq 
   \\&\qquad \qquad \qquad  \max_{(n_{q,i}, i\in [s])\in \mathcal{N}}\max_{\substack{(P_i, i\in [s])\in \mathcal{P}}}\sup_{\mathsf{A_i}\in \mathscr{B}_{n_{q,i}}}\sup_{P_{\widetilde{X}^{n_t}}} \sum_{k =1}^{s} I_{\mathsf{A}_i}(\widetilde{X}_k;V_k),\nonumber
\end{align}
where  $\mathcal{N}\triangleq\{(n_{q,i},i\in [s]:\sum_{i\in [s]}n_{q,i}=n_q\}$,  $\mathcal{P}\triangleq\{(P_i,i\in [s]:\sum_{i\in[s]}P_{i}=P\}$, $\widetilde{Y}_k = \sigma_{k} \widetilde{X}_k+N_k$, $N^s$ is a vector of i.i.d. zero-mean Gaussian variables with unit variance, $\sigma_k$ is the $k$th eigenvalue of $\mathbf{h}$, $\mathscr{B}_{n_{q_i}}$ is the set of all partitions of $\mathbb{R}$ into $\zeta$  intervals, where $\zeta=2n_{q_i}$ otherwise, and the mutual information is evaluated with respect to 
$P^{\mathsf{A}_i}_{\widetilde{X}^{n_t},V_k}$ such that:
\begin{align*}
   P^{\mathsf{A}_i}_{\widetilde{X}^{n_t},V_k}(\mathcal{C} ,\ell)=P(X^{n_t}\!\in\! \mathcal{C}, Y^{n_r}\!\in\! \mathcal{A}_{\ell}), \mathcal{C}\in \mathbb{B}^{n_t}, k\in [2^{n_{q_i}}], \ell\in [|\mathsf{A}_i|].
\end{align*}
\end{Theorem}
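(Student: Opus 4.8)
The plan is to reuse the proof of Theorem~\ref{th:2} almost verbatim, replacing the ``arbitrary connected partition'' step by what quadratic analog processing can realize on a scalar channel. First, diagonalize the channel in the analog domain: write $\mathbf{h}=\mathbf{U}\Sigma\mathbf{V}^{\top}$ with $\mathbf{U},\mathbf{V}$ orthogonal and $\Sigma$ carrying the nonzero singular values $\sigma_1,\dots,\sigma_s$; let the transmitter precode $\mathbf{X}=\mathbf{V}\widetilde{\mathbf{X}}$ with the $\widetilde{X}_k$, $k\in[s]$, independent and $\mathbb{E}[\widetilde{X}_k^{2}]\le P_k$ (and $\widetilde{X}_k\equiv 0$ for $k>s$); and let the first, linear stage of $f_a$ apply $\mathbf{U}^{\top}$, producing $\widetilde{\mathbf{Y}}=\mathbf{U}^{\top}\mathbf{Y}$ with $\widetilde{Y}_k=\sigma_k\widetilde{X}_k+\widetilde{N}_k$ for $k\in[s]$, where $\widetilde{N}^{n_r}=\mathbf{U}^{\top}\mathbf{N}$ is again i.i.d.\ standard Gaussian and the remaining $n_r-s$ coordinates are pure noise that is discarded. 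Orthogonality of $\mathbf{V}$ carries the average power constraint into $\sum_{k\le s}P_k\le P$, so $(P_i)\in\mathcal{P}$; also fix an ADC allocation $(n_{q,i})\in\mathcal{N}$.

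On sub-channel $k$, make each of its $n_{q,k}$ one-bit ADCs test the sign of a quadratic $\alpha\widetilde{Y}_k^{2}+\beta\widetilde{Y}_k-t$. Since $n_{q,k}$ real quadratics have at most $2n_{q,k}$ real roots, they cut $\mathbb{R}$ into at most $2n_{q,k}+1$ intervals, and the sign vector of the quadratics agrees on the two extreme intervals (each quadratic has constant sign near $\pm\infty$, equal to the sign of its leading coefficient), so at most $\zeta=2n_{q,k}$ distinct ADC output vectors can occur; conversely, any partition of $\mathbb{R}$ into $\zeta=2n_{q,k}$ intervals (or the trivial partition, $\zeta=1$, when $n_{q,k}=0$), i.e.\ any $\mathsf{A}_k\in\mathscr{B}_{n_{q,k}}$, is realized by placing the $2n_{q,k}-1$ breakpoints at roots of the quadratics and choosing the leading-coefficient signs suitably --- the one-dimensional case of the region-counting argument underlying Theorem~\ref{th:4} and \cite{ho2006number}. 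Hence the receiver recovers from the ADC bits, without loss, the interval index $V_k\in[|\mathsf{A}_k|]$ of $\widetilde{Y}_k$.

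Given the inputs, the $\widetilde{Y}_k$, and hence the $V_k$, are independent across $k\in[s]$, so $\widetilde{X}^{s}\mapsto V^{s}$ is a product of scalar DMCs, the $k$-th having transition law $P(X^{n_t}\!\in\!\mathcal{C},\,Y^{n_r}\!\in\!\mathcal{A}_\ell)$ as in the statement. Running independent random codebooks on the sub-channels --- equivalently, taking a product input distribution, which is optimal here since the objective depends only on the per-sub-channel marginals and the constraints are separable --- achieves $\sum_{k=1}^{s}I_{\mathsf{A}_i}(\widetilde{X}_k;V_k)$; maximizing over $P_{\widetilde{X}^{n_t}}$, $\mathsf{A}_i\in\mathscr{B}_{n_{q,i}}$, $(n_{q,i})\in\mathcal{N}$ and $(P_i)\in\mathcal{P}$ yields \eqref{eq:th:5}. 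As in Theorem~\ref{th:1}, measurability of the cells --- their boundaries are zero sets of polynomials --- makes every probability and mutual information well defined.

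\textbf{Main obstacle.} The point that needs real care is that $\mathcal{F}^V_a$ supplies only the single quadratic form $\sum_{j=1}^{n_r}Y_j^{2}=\|\mathbf{Y}\|^{2}$, which is \emph{invariant} under the rotation $\mathbf{U}^{\top}$ and therefore does not decompose into the per-sub-channel squares $\widetilde{Y}_k^{2}$ that the quantization step needs; feeding $\|\mathbf{Y}\|^{2}$ directly would inject $\sum_{j\ne k}\widetilde{Y}_j^{2}$ --- the silent-dimension noise together with cross-talk from the other sub-channels --- into sub-channel $k$'s comparisons and destroy their clean scalar-quadratic form. The remedy is to exploit that $f_a$ is a \emph{composition} of linear combiners and square-law elements: first project onto the coordinate $\widetilde{Y}_k$, and only then square, so that each comparison on sub-channel $k$ is an exact quadratic in $\widetilde{Y}_k$ alone, with the irrelevant dimensions projected away rather than folded into the quadratic. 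Verifying that this composition is admissible in $\mathcal{F}^V_a$, together with the bookkeeping for empty sub-channels and for the $\mathcal{N}$/$\mathcal{P}$ optimizations, is then routine.
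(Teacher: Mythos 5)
Your proposal is correct and follows essentially the route the paper intends: the paper gives no written proof of Theorem~\ref{th:5} beyond the remark that it follows from the SVD parallelization of Theorem~\ref{th:2} with the partition class restricted to what scalar quadratic comparators can realize, which is exactly your argument, including the count that $n_{q,i}$ quadratics yield at most $\zeta=2n_{q,i}$ distinguishable intervals because each comparator's sign agrees on the two unbounded intervals unless its leading coefficient vanishes. Your explicit treatment of the rotation-invariance of $\sum_j Y_j^2$ --- resolving it by projecting onto $\widetilde{Y}_k$ before squaring, as the composition structure of $f_a$ and the circuit of Section~\ref{sec:cir} permit --- and of the realizability of an arbitrary interval partition (which in fact requires one degenerate linear comparator plus nested quadratic roots to make the sign map injective) is more detailed than anything in the paper and fills in steps the paper leaves implicit.
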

\begin{Remark}
It can be noted that if the analog processing function is restricted to linear functions as in \cite{molisch2017hybrid,heath2016overview,alkhateeb2014mimo,nossek2006capacity,abbasISIT2018,rini2017general,mezghani2009transmit,khalili2020throughput,dutta2020capacity},  $\mathscr{B}_{n_{q_i}}$ in Theorem \ref{th:5} would be replaced by the set of all partitions of $\mathbb{R}$ into $n_{q_i}+1$ partitions which leads to a strictly smaller achievable rate.
\end{Remark}
\section{Circuit Design for nonlinear Analog Operations}
\label{sec:cir}
In the prequel, we have evaluated the fundamental limits of communication, in terms of achievable rates in MIMO systems with one-bit ADCs equipped with nonlinear analog operations (limited degree polynomials) prior to the ADC operation. In this section, we provide an example of a circuit  and provide circuit simulations to justify the feasibility of such nonlinear operations. 

The implementation relies on the fact that Complementary Metal-Oxide-Semiconductor (CMOS) and Bipolar transistors --- the core components of integrated analog circuits --- manifest inherent device-centric nonlinearity by generating integer harmonic frequencies when excited by a sinusoidal input waveform, i.e. $ \cos (\omega t+ \phi)$. Various mathematical models to capture transistor nonlineariy exist, among which the adoption of Volterra-Weiner series shown in \cite{JSSC2017} addresses the general scenario. The nonlinear response of a transistor to a sinusoidal input waveform depends on excitation frequency, $\omega$. 
The received modulated signal in a MIMO receiver is a severely attenuated version of the input signal, and is non-monotone. Consequently, it may not be directly applied to a nonlinear transistor, and a two-step procedure involving a pre-processing step followed by nonlinear analog operations is required as described below.  
\\\textbf{Step 1: Conversion of the received signal into a monotone sinusoidal.} 
To explain this step, let us assume that the non-monotone received signal is $A sinc(\omega_0 t), A\in \{-2,-1,1,2\}, \omega_0>0$. This is injected into an integrator circuit [\textit{cf.} Fig. \ref{fig:sinc}(a)]. For each channel-use, the integrator output after $T_s>0$ seconds (extracted using switches $SW_1$ and $SW_2$) is injected to control the voltage of complementary switches $SW_3$ and $SW_4$, and the two identical  variable capacitors with opposite polarities in Fig. \ref{fig:sinc}(b). The resonator circuit generates monotone sinusoidal waveforms with amplitudes proportional to $|INT(A)|$. The dependence on $|INT(A)|$ is due to the fact that the associated quality factor $Q$ of the variable capacitors changes linearly within the possible range of $|INT(A)|$, thus generating sinusoidal waveforms with varying amplitude and frequency [\textit{cf.} Fig. \ref{fig:sinc}(b)].
\\\textbf{Step 2: Generating polynomial outputs.}
We apply the sinusoidal waveforms generated by the resonator to a nonlinear circuit, so that the frequency harmonics are generated at the output, with amplitude of \textit{i}th harmonic proportional to $B^i$, where $B$ is the amplitude of the sinusoidal input.  To elaborate, in Figure \ref{fig:sinc}(c), we incorporate a differential amplifier circuit \cite{razavi2005design}.  Based on \cite{JSSC2017}, for this circuit, we have $
B_{out}=\alpha B^2$,    
where $\alpha$ is the coefficient of the Volterra-Weiner representation of the nonlinear circuit \cite{volterra}, and $B_{out}$ is the amplitude of the generated second harmonic signal. It can be noted that the value of $\alpha$ within the resonator output frequency range remains constant. The amplitude ratio of generated second harmonic waveforms in Fig.\ref{fig:sinc}(c) and the fundamental frequency components in Fig. \ref{fig:sinc}(b) illustrate the feasibility of producing polynomial functions of the input amplitude in the analog domain. 
\begin{figure}[t!]
\centering
\includegraphics[width=0.9\linewidth]{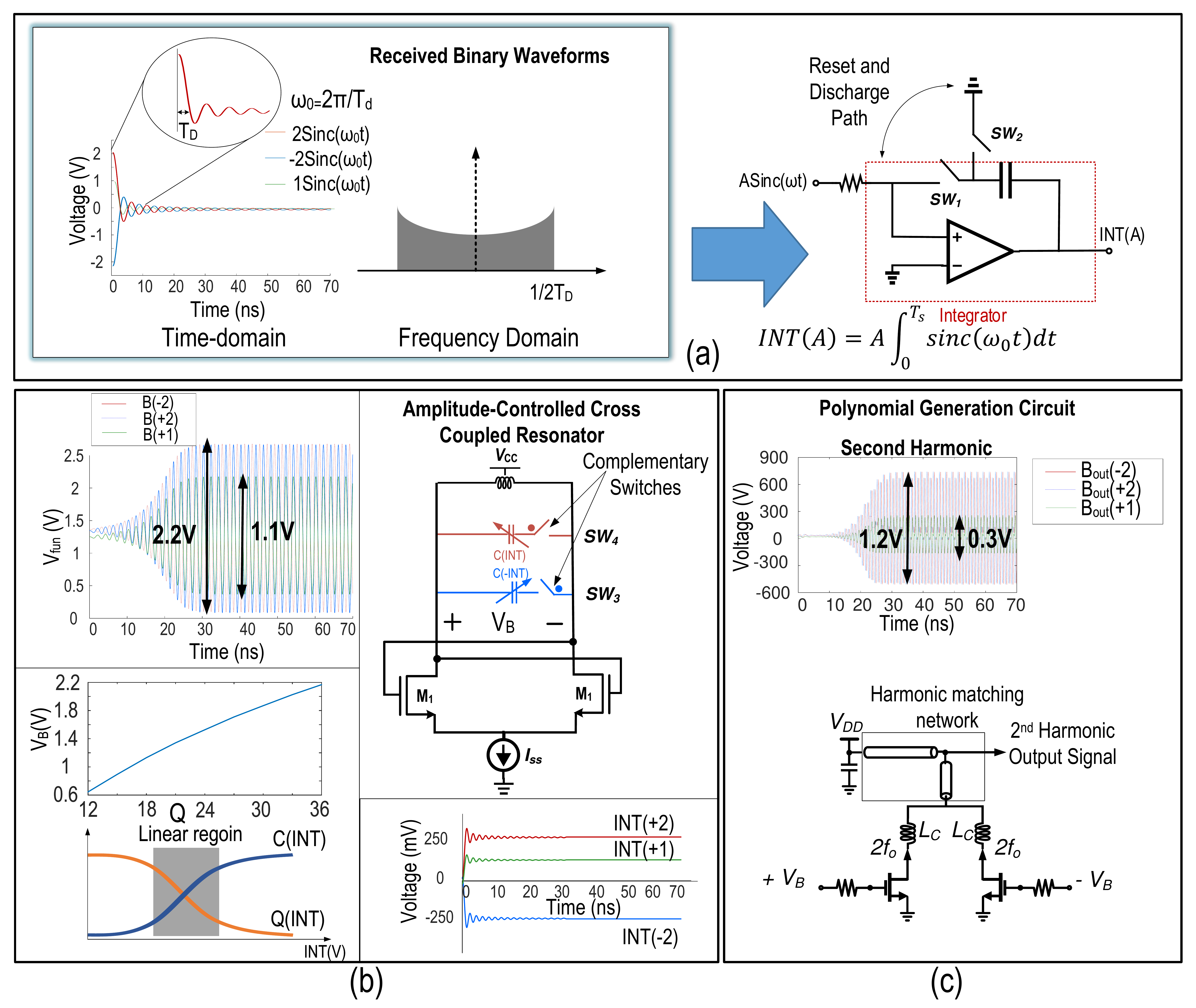}
\caption{(a) Passing received waveform through an integrator, (b) resonator amplitude follows integrator output, (c) polynomial harmonic generator.}
\label{fig:sinc}
\end{figure}

\section{Conclusion}
The application of nonlinear analog operations in MIMO receivers was considered. A receiver architecture consisting of linear analog combiners, implementable nonlinear analog operators, and one-bit threshold ADCs was designed, and
the fundamental information theoretic performance limits of the resulting communication system were investigated. To justify the feasibility of the nonlinear operations, an analog circuit was introduced, and circuit simulations exhibiting the generation of nonlinear analog operations were provided.
\begin{appendices}
\section{Proof of Theorem \ref{th:1}}
\label{App:th:1}
We provide an outline of the proof for $Rank(\mathbf{h})=n_r$. The proof for $Rank(\mathbf{h})\leq n_r$ follows by similar arguments. Fix partition $\mathsf{A}\in \mathscr{A}_{n_r,n_q}$. 
Define the collection of functions
\begin{align*}
    f_j(y^{n_r})= (-1)^{mod_{2^j}(k)}||y^{n_r}-\partial\mathcal{A}_k||_2,
\end{align*}
where $y^{n_r}\in \mathbb{R}^{n_r}$, $j\in \{0,1,\cdots,n_q-1\}$, $mod_b (a)$ denotes $a$ modulo $b$, $k\in [2^{n_q}]$ is the index of the partition region for which $y^{n_r}\in \mathcal{A}_k$,  and  $||y^{n_r}-\partial\mathcal{A}_k||_2$ is the $\ell_2$ distance between $y^{n_r}$ and the boundary of the region $\mathcal{A}_k$. The function $f_j(\cdot)$ is continuous and its roots are the boundary points of the partition regions $\mathcal{A}_{k'}, k'\in [2^{n_q}]$. Furthermore, its value is positive for all interior points of regions $\mathcal{A}_{k'}, k'\in [2^{n_q}]$ for which $mod_{2^j}k'$ is even and is negative otherwise. As a result, $Sign(f_j(y^{n_r})), j\in \{0,1,\cdots,n_q-1\}$ is the binary representation of the index of $\mathcal{A}_k$, where $y^{n_r}\in \mathcal{A}_k$, and $Sign(\cdot)$ is equivalent to a zero-threshold one-bit ADC. So, $I(X^{n_t};Q_0^{n_q}(W^{n_q}))=I_{\mathsf{A}}(X^{n_t};V)$, where $Q_0^{n_q}(\cdot)$ represents $n_q$ zero-threshold one-bit ADCs and $W_i=f_{i-1}(Y^{n_r}), i\in [n_q]$.
It remains to show that each $f_j(\cdot)$ is `well-approximated' by a polynomial function of $y^{n_r}$. To see this, we let $L>0$ and define $E_i\triangleq\mathbbm{1}(|Y_i|<n_r L), i\in [n_r]$. We note that:
\begin{align*}
&I_{\mathsf{A}}(X^{n_t}; V) \leq \sum_{i=1}^{n_r}H(E_i)+I(X^{n_t};V|E^{n_r})
\\&\stackrel{(a)}{\leq}  \sum_{i=1}^{n_r}H(E_i)+
P(\exists i\in [n_r]: E_i=0)\log{2^{n_q}} +
I_{\mathsf{A}}(X^{n_t};V|E_i=1, i\in [n_r])  
\\&\leq  \sum_{i=1}^{n_r}H(E_i)+ \sum_{i=1}^{n_r}P(|Y_i|>n_r L)n_q+I_{\mathsf{A}}(X^{n_t};V|E_i=1, i\in [n_r])
\\& \stackrel{(b)}{\leq} 
 \sum_{i=1}^{n_r}H(E_i)+\frac{\gamma_Y  n_q}{L}+I_{\mathsf{A}}(X^{n_t};V|E_i=1, i\in [n_r]),
\end{align*}
where we have defined $\gamma_Y\triangleq \frac{1}{n_r}\sum_{i=1}^{n_r}\mathbb{E}(|Y_i|)$,  (a) holds since $V$ takes at most $2^{n_q}$ values
, and (b)
follows from Markov's inequality.  Note that $\gamma_Y< \infty$ since $\frac{1}{n_t}\sum_{i=1}^{n_t}\mathbb{E}(X^2_i)\leq P <\infty$. Consequently, by the lower-semi continuity of mutual information $I(X^{n_t}; V)$ approaches $I_{\mathsf{A}}(X^{n_t};V|E_i=1, i\in [n_r])$ as $L\to \infty$. On the other hand, note that $[-n_rL,n_rL]^{n_r}$ equipped with the $\ell_2$ distance is a compact metric space. Hence,  
by the Stone-Weierstrass Theorem, the polynomial functions are dense in functions defined on $[-n_rL,n_rL]^{n_r}$ and there exists a sequence of polynomial functions $f_{t,j}(\cdot), t\in \mathbb{N}$ which converge uniformly to the
restriction of $f_j(\cdot), j\in \{0,1,2,\cdots,n_q-1\}$ to $[-n_rL,n_rL]^{n_r}$. Consequently, for an arbitrary $\epsilon>0$, there exists $L$ and $t$ large enough, so that $\sum_{i=1}^{n_r}H(E_i)+\frac{\gamma_Yn_q}{L}\leq \epsilon$, $P(\widehat{W}^{n_q}=Q_0^{n_q}(W'^{n_q}))\geq 1-\epsilon$, where $W'_i\triangleq f_{t,j}(Y^{n_r})$,
and  \[|I_{\mathsf{A}}(X^{n_t};V|E_i\!=\!1, i\!\in\! [n_r])-I(X^{n_t};Q_0^{n_q}(W'^{n_q})|E_i\!=\!1, i\!\in \![n_r])|\leq \epsilon.\]

\end{appendices}

 \clearpage
\bibliographystyle{unsrt}
\bibliography{References}

\end{document}